\documentclass[letterpaper, 10 pt, conference]{ieeeconf}  

\IEEEoverridecommandlockouts                              

\usepackage{graphicx}
\let\labelindent\relax
\usepackage{sty/nlab-settings}
\usepackage{sty/nlab-commands}

\usepackage{sty/nlab-theorems}
\usepackage{sty/mymacros}
\usepackage{mathtools}
\usepackage{dsfont}

\title{\LARGE \bf
Periodic Event-Triggered Explicit Reference Governor for Constrained Attitude Control on $SO(3)$
}

\author{Satoshi Nakano$^{1}$, Masahiro Suzuki$^{1}$, Misa Ohashi$^{1}$, Noboru Chikami$^{1}$, and Shusuke Otabe$^{1}$
\thanks{*This work was supported by JSPS KAKENHI Grant Numbers JP21K03308, JP21K20334, JP23K13352, JP24K16894}
\thanks{$^{1}$The authors are with the Department of Engineering, Nagoya Institute of Technology, Nagoya, 4668555 Aichi, Japan
        {\tt\small \{nakano, masahiro, ohashi.misa, chikami.noboru, otabe.shusuke\}@nitech.ac.jp}}%
}

\begin{document}

\maketitle
\thispagestyle{empty}
\pagestyle{empty}

\begin{abstract}
This letter addresses the constrained attitude control problem for rigid bodies directly on the special orthogonal group $SO(3)$, avoiding singularities associated with parameterizations such as Euler angles.
We propose a novel Periodic Event-Triggered Explicit Reference Governor (PET-ERG) that enforces input saturation and geometric pointing constraints without relying on online optimization.
A key feature is a periodic event-triggered supervisory update: the auxiliary reference is updated only at sampled instants when a robust safety condition is met, thereby avoiding continuous-time reference updates and enabling a rigorous stability analysis of the cascade system on the manifold.
Through this structured approach, we rigorously establish the asymptotic stability and exponential convergence of the closed-loop system for almost all initial configurations.
Numerical simulations validate the effectiveness of the proposed control architecture and demonstrate constraint satisfaction and convergence properties.
\end{abstract}

\section{INTRODUCTION}
\label{sec:introduction}
Many applications require rigid bodies to perform precise reorientation maneuvers, and attitudes are naturally represented by rotation matrices on the special orthogonal group $SO(3)$ \cite{lee_11acc, lee_13tcst, chaumette_06ram}. Common parameterizations (Euler angles, quaternions) introduce singularities or topological complications, motivating controller design directly on $SO(3)$ \cite{chaturvedi_11csm}.

Reorientation tasks frequently face strict, simultaneous constraints: geometric pointing constraints (e.g., avoiding the Sun) and actuator saturation. While Model Predictive Control (MPC) handles constraints explicitly, it demands online optimization; artificial potential or barrier-based methods are computationally light but often lack rigorous treatment of input saturation \cite{lee_11acc, kalabic_17automatica, kulumani_17ijc}. Reference Governor (RG) methods provide a complementary, optimization-free approach by modulating the reference to a pre-stabilized system, and the Explicit Reference Governor (ERG) in particular uses invariant sets and offline safety margins to enforce constraints without online optimization \cite{garone_16tac, nicotra_19tac}. Recent event-triggered RG variants aim to reduce updates or communication load \cite{an2024}, but are predominantly developed in Euclidean settings.

In our prior work \cite{nakano2023}, we introduced a continuous-time ERG on $SO(3)$ to handle these specific constraints.
Nevertheless, a continuous-time formulation poses a fundamental theoretical challenge: rigorously proving the overall asymptotic stability of the interconnected closed-loop system, i.e., formally guaranteeing that the actual attitude $R$ converges to the desired attitude $R_d$, rather than just the auxiliary reference $R_g$ converging to $R_d$, is highly non-trivial.
Because the continuous-time reference update is tightly coupled with the nonlinear state-dependent Lyapunov function, the update rate can become highly sensitive or even vanish near the boundary of the safe set, complicating the analysis of the full cascade convergence.
This tight coupling makes it formidably complex to analytically guarantee that the total variation of the reference signal remains bounded, which is a critical prerequisite for establishing the exact convergence of the full nonlinear cascade system.

To resolve this analytical bottleneck without sacrificing the smooth tracking performance of continuous-time feedback, we propose a Periodic Event-Triggered ERG (PET-ERG) that evaluates a robust safety condition exclusively at discrete sampling instants. 
Depending on this evaluation, the auxiliary reference is either continuously updated using continuous state measurements or held strictly constant until the next sampling instant. 
While conceptually related to Periodic Event-Triggered Control (PETC) frameworks \cite{heemels2013}, our use of periodic event-triggering is not intended for communication or energy reduction, unlike conventional PETC. 
Instead, it is deliberately introduced as a supervisory mechanism to enforce a finite minimum duration of reference updates. 
By effectively decoupling the reference kinematics from the microscopic fluctuations of the inner-loop dynamics, this finite duration serves as the key property that enables bounding the total variation of the reference signal and, in turn, a rigorous stability analysis of the cascade system on $SO(3)$. 
Consequently, this structured hybrid architecture provides the necessary analytical properties to bridge the gap between $\lim_{t \to \infty}\|R - R_g\|=0$ and $\lim_{t \to \infty}\|R_g - R_d\|=0$, allowing us to rigorously establish the asymptotic stability and exponential convergence of the overall closed-loop system for almost all initial configurations under simultaneous geometric and input constraints, and validate the approach via numerical simulations.

\subsection*{Notation and Preliminaries:}
The special orthogonal group is defined as $SO(3) = \{\ro \in \R^{3 \times 3} \mid \ro^\T \ro = \ma{I}_3, \det \ro = 1\}$, and its Lie algebra is $\mathfrak{so}(3) = \{\ma{S} \in \R^{3 \times 3} \mid \ma{S}^\T = -\ma{S}\}$.
The tangent space to $SO(3)$ at $\ro \in SO(3)$ is denoted by $T_{\ro} SO(3) = \{\ma{R} \ma{S} \mid \ma{S} \in \mathfrak{so}(3) \}$.
The notation $\|\cdot\|$ denotes the Euclidean norm for vectors and the Frobenius norm for matrices.
The unit sphere in $\R^3$ is defined as $\uS^2 = \{\ve{x} \in \R^3 \mid \|\ve{x}\| = 1\}$.

The hat map $\wedge : \R^3 \to \mathfrak{so}(3)$ is defined such that $\hat{\ve{x}} \ve{y} = \ve{x}^\wedge \ve{y} = \ve{x} \times \ve{y}$ for any $\ve{x}, \ve{y} \in \R^3$, where the notations $\hat{\cdot}$ and $(\cdot)^\wedge$ are used interchangeably.
The inverse map is denoted by $(\cdot)^\vee : \mathfrak{so}(3) \to \R^3$ such that $(\ve{x}^\wedge)^\vee = \hat{\ve{x}}^\vee = \ve{x}$.
The skew-symmetric projection operator is defined as $\sk(\ma{A}) = \frac{1}{2} (\ma{A} - \ma{A}^\T)$.
The indicator function $\mathds{1}_{I}$ takes the value $1$ if the argument belongs to the set $I$ and $0$ otherwise.

Finally, we recall Barbalat's Lemma from \cite{slotine_book}: if a differentiable function $f(t)$ has a finite limit as $t \to \infty$ and if $\dot{f}$ is uniformly continuous, then $\dot{f}(t) \to 0$ as $t \to \infty$.

\section{PROBLEM STATEMENT}
\label{sec:description}
This letter considers the constrained attitude control problem for a fully actuated rigid body.
Let $\Sigma_w$ and $\Sigma_b$ denote the inertial and body-fixed frames, respectively.
The attitude of the rigid body is represented by a rotation matrix $\rmr{} \in SO(3)$, which maps coordinates from $\Sigma_b$ to $\Sigma_w$.
The attitude kinematics and dynamics are governed by the following equations \cite{murray_book}:
\begin{align}
\label{eq:rot-sys}
    \rmr[d]{} = \rmr{} \hat{\ve{\omega}}, \quad
    \ma{J} \dot{\ve{\omega}} & = (\ma{J} \ve{\omega})^\wedge \ve{\omega} + \ve{\tau},
\end{align}
where $\ve{\omega} \in \mathbb{R}^3$ is the angular velocity vector expressed in $\Sigma_b$, $\ma{J} \in \mathbb{R}^{3 \times 3}$ is the symmetric positive-definite inertia matrix, and $\ve{\tau} \in \mathbb{R}^3$ is the control torque input.

To ensure safe operation and respect actuator limits, the system is subject to the following pointwise-in-time constraints:
\begin{subequations}
\label{eq:constraint}
\begin{align}
    \label{eq:constraint_input}
    \|\ve{\tau}(t)\| &\leq \tau_{\max}, \quad \forall t \ge 0, \\
    \label{eq:constraint_pointing}
    \ve{a}_{c}^\T \rmr{}(t) \ve{a}_{b} &\geq \cos \theta_{c}, \quad \forall t \ge 0,
\end{align}
\end{subequations}
where $\tau_{\max} > 0$ is the maximum available torque. The geometric pointing constraint \eqref{eq:constraint_pointing} ensures that a sensitive body-fixed sensor, aligned with the axis $\ve{a}_b \in \uS^2$, maintains a minimum angular distance $\theta_c$ from a critical inertial direction $\ve{a}_c \in \uS^2$.

Let $\rmr{d}(t) \in SO(3)$ be a (possibly time-varying) desired attitude trajectory.
The control objective is to design a control law that guarantees: (i) \textbf{Asymptotic Tracking}: for any constant steady-state admissible desired attitude $\rmr{d}$, $\lim_{t \to \infty} \rmr{}(t) = \rmr{d}$; and (ii) \textbf{Constraint Satisfaction}: the specific operational constraints \eqref{eq:constraint} are satisfied at all times for any continuous reference $\rmr{d}(t)$.

To address this problem, we employ an Explicit Reference Governor (ERG) framework.
The core idea is to manipulate an auxiliary reference $\rmr{g}$ based on the invariant level sets of a Lyapunov function $V$.
Specifically, we derive a scalar function $\Gamma(\rmr{g})$ that defines a safe sublevel set of $V$, ensuring that constraints are never violated as long as the state remains within this set.
This invariance property forms the basis of the periodic event-triggered ERG strategy detailed in Section \ref{sec:erg_overview}.

\section{CONTROL ARCHITECTURE}
\label{sec:scheme}

To address the constrained attitude control problem, we employ a cascade control architecture depicted in Fig.~\ref{fig:block}. The scheme comprises two main subsystems: an \textit{Attitude Stabilization Controller (Inner Loop)} that stabilizes the rigid body attitude $\rmr{}$ to an auxiliary reference $\rmr{g}$, and an \textit{Explicit Reference Governor (Outer Loop)} that manipulates $\rmr{g}$ to ensure constraint satisfaction while guiding the system toward the desired reference $\rmr{d}$.
This letter builds upon the ERG framework specifically designed for $SO(3)$ in our previous work \cite{nakano2023}. By exploiting the invariant sets of the pre-stabilized inner loop, the ERG manages constraints without the need for online optimization.

\begin{figure}[t]
    \centering
    \includegraphics[width=\linewidth]{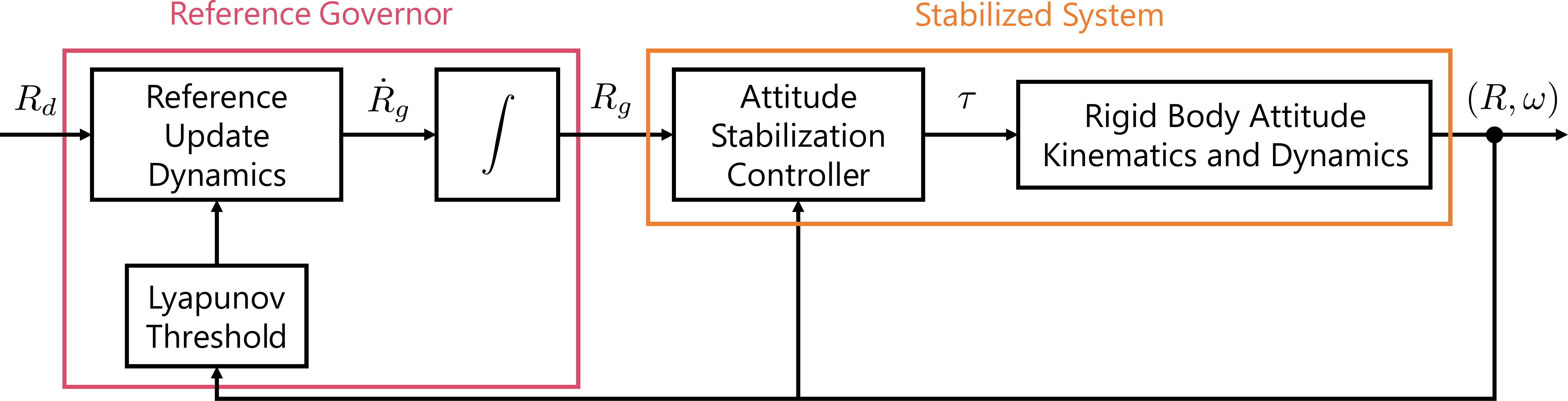}
    \caption{Block diagram of control scheme.}
    \label{fig:block}
\end{figure}

\subsection{Attitude Stabilization Controller}
\label{sec:prestabilization}
We first design a nominal control law to stabilize the attitude dynamics \eqref{eq:rot-sys} to a constant reference $\rmr{g}$. At this stage, constraints \eqref{eq:constraint} are disregarded, as they will be handled by the ERG in the outer loop.

Consider the proportional-derivative (PD) control law on $SO(3)$ \cite{bullo_book}:
\begin{align}\label{eq:taulaw}
    \ve{\tau}(\rmr{}, \ve{\omega}, \rmr{g}) = \gain{p} \sk (\rmr{e})^\vee - \gain{d} \ve{\omega},
\end{align}
where $\gain{p}, \gain{d} > 0$ are tuning gains and $\rmr{e} := \rmr[t]{} \rmr{g} \in SO(3)$ represents the attitude error.
For any constant reference $\rmr{g}$, the asymptotic stability of the closed-loop system formed by \eqref{eq:rot-sys} and \eqref{eq:taulaw} is established using the Lyapunov function candidate:
\begin{align*}
    V(\rmr{}, \ve{\omega}, \rmr{g}) & = \frac{1}{2} \ve{\omega}^\T \ma{J} \ve{\omega} + \gain{p} \phi(\rmr{e}), \\
    \phi (\rmr{e})                  & = \frac{1}{2} \tr(\ma{I}_3 - \rmr{e}) \geq 0.
\end{align*}
The time derivative of $V$ along the trajectories of the closed-loop system satisfies $\dot{V} = -\gain{d} \|\ve{\omega}\|^2 \leq 0$.
By invoking LaSalle's invariance principle, it follows that $(\rmr{}, \ve{\omega}) \to (\rmr{g}, \ve{0})$ asymptotically for any constant $\rmr{g}$.

\subsection{Explicit Reference Governor (ERG)}
\label{sec:erg_overview}

The control law presented in Section~\ref{sec:prestabilization} stabilizes the system to a constant reference $\rmr{g}$ but does not account for constraints.
To enforce constraints during transients, we augment the closed-loop system with an ERG.
The ERG is a supervisory scheme that continuously manipulates the auxiliary reference $\rmr{g}$ according to the differential equation $\rmr[d]{g} = \Delta(V,\rmr{g}) \, \ma{\rho}_n(\rmr{d},\rmr{g})$, where $\Delta(V,\rmr{g}) \in \R_{\ge 0}$ is the \emph{Dynamic Safety Margin} (DSM) and $\ma{\rho}_n(\rmr{d},\rmr{g}) \in T_{\rmr{g}}SO(3)$ is the \emph{Navigation Field} (NF).
In this work, we adopt the specific DSM and NF formulations proposed in \cite{nakano2023} to ensure rigorous stability properties on the manifold.

\subsubsection{Dynamic Safety Margin (DSM)}
The DSM guarantees that the system state strictly satisfies the specific operational constraints \eqref{eq:constraint}.
By leveraging the forward invariance of the sublevel sets of $V(\rmr{}, \ve{\omega}, \rmr{g})$ for a fixed $\rmr{g}$, we construct a state-dependent safety bound $\Gamma(\rmr{g})$. This bound ensures that the invariant level set $\mathcal{I}(\Gamma) := \{ (\rmr{},\ve{\omega}) \mid V(\rmr{},\ve{\omega},\rmr{g}) \le \Gamma(\rmr{g}) \}$ is contained in the interior of the admissible set $\mathcal{C} = \mathcal{C}_d \cap \mathcal{C}_g$, where $\mathcal{C}_d$ and $\mathcal{C}_g$ represent the states satisfying the input constraint \eqref{eq:constraint_input} and the pointing constraint \eqref{eq:constraint_pointing}, respectively.

The aggregate threshold is defined as $\Gamma(\rmr{g}) = \min \{\Gamma_d, \Gamma_g (\rmr{g}), k_p(2 - \varepsilon_\Gamma)\}$, where $\Gamma_g(\rmr{g})$ handles $\mathcal{C}_g$ and is derived from the closed-form geometry detailed in \cite{nakano2023}, and $\varepsilon_\Gamma \in (0, 2)$ is a small constant.
Regarding $\Gamma_d$, we reduce the conservatism of our prior work \cite{nakano2023} by solving the associated offline optimization problem directly on $SO(3) \times \R^3$ to ensure the forward invariance of $\mathcal{C}_d$.
Finally, the DSM is constructed as $\Delta(V,\rmr{g}) = \kappa \max \{ \Gamma(\rmr{g}) - V(\rmr{},\ve{\omega},\rmr{g}), 0 \}$ with $\kappa > 0$. This ensures that $\dot{\rmr{g}} = 0$ whenever $V \ge \Gamma(\rmr{g})$, strictly enforcing constraint satisfaction.

\subsubsection{Navigation Field (NF)}
The Navigation Field determines the direction of motion for $\rmr{g}$ toward $\rmr{d}$.
We employ the specific NF proposed in \cite{nakano2023}, defined as the negative gradient of an artificial potential function $P(\rmr{g}, \rmr{d})$ with respect to $\rmr{g}$: $\ma{\rho}_n(\rmr{d},\rmr{g}) = - \gr P(\rmr{g}, \rmr{d})$.
The potential function $P = P_a + P_r$ consists of an attractive term minimized at $\rmr{g} = \rmr{d}$ and a repulsive term that diverges near the boundary of the steady-state admissible set. This repulsive property safely guides $\rmr{g}$ around the non-convex forbidden regions induced by \eqref{eq:constraint_pointing}. Crucially, as established in \cite{nakano2023}, this NF guarantees almost global convergence to $\rmr{d}$ (except for a set of measure zero corresponding to the undesired equilibria induced by the topology of $SO(3)$) despite the non-convexity of the constraints. This property is essential for the periodic event-triggered stability analysis presented in Section~\ref{sec:proposed_method}.

\section{PERIODIC EVENT-TRIGGERED REFERENCE GOVERNOR ON $SO(3)$}
\label{sec:proposed_method}

\subsection{Proposed Update Mechanism}
While the continuous-time ERG on $SO(3)$ guarantees convergence of $\rmr{g}$ to $\rmr{d}$ and safety of the closed-loop system under the assumption of a constant reference or a static auxiliary reference, establishing the stability of the interconnected closed-loop system (cascade of the pre-stabilized dynamics and the ERG) is non-trivial.
To address this, we propose a \emph{periodic event-triggered} implementation of the reference update law.
This approach allows for a rigorous stability analysis of the overall system.

In the continuous formulation, the auxiliary reference dynamics are given by:
\begin{align}
\dot{R}_g(t) = \kappa \left(\Gamma(\rmr{g}) - V\right) \frac{\rho_n(\rmr{d}, \rmr{g})}{\max\{\|\rho_n(\rmr{d}, \rmr{g})\|,\varepsilon\}}.
\label{eq:cont_update}
\end{align}
In this work, we reformulate~\eqref{eq:cont_update} within the framework of \emph{periodic event-triggered control} (PETC).

Let $\mathcal{T} = \{t_k\}_{k\in\mathbb{N}_0}$ denote the set of sampling instants, defined as $t_{k+1} = t_k + T_s$ with a fixed sampling period $T_s>0$.
At each sampling instant $t_k \in \mathcal{T}$, the safety condition is evaluated.
The reference $\rmr{g}$ is updated during the interval $[t_k, t_{k+1})$ only if the following condition holds:
\begin{align}
\Gamma(\rmr{g}(t_k)) - c_\Gamma V(t_k) \ge 0.
\label{eq:event_condition}
\end{align}
where $c_\Gamma > 1$ is a tuning parameter.
This condition acts as a robust safety margin, ensuring that the Lyapunov function $V$ remains bounded by $\Gamma$ even during the inter-sample behavior.
Choose $T_s$ and $c_\Gamma>1$ so that $s_{\#}:=\min\{T_s, (c_\Gamma-1)c_1/(2 c_\Gamma C)\}>0$ (with constants defined later in Theorem 1); this ensures the positive margin over each triggered interval.
The PETC-based update law is defined as:
\begin{align}
\label{eq:ref_update}
    \dot{R}_g(t) &=
    \kappa \big(\Gamma(t) - V(t)\big) \mathds{1}_{\{(\Gamma-c_\Gamma V)(\sigma(t))\geq 0\}} \notag \\
    &\quad \times \frac{\rho_n(t)}{\max \{\|\rho_n(t)\|, \varepsilon\}}, \quad \forall t \in [t_k, t_{k+1}),
\end{align}
where $\sigma(\tau) := \max \{t_k \in \mathcal{T} \mid t_k \le \tau\}$.
The update law \eqref{eq:ref_update} implements a sample-and-hold strategy: if the safety condition is met at $t_k$, the reference moves according to the Navigation Field evaluated at $t_k$; otherwise, it remains constant.
The overall closed-loop system consists of the attitude dynamics \eqref{eq:rot-sys} with control law \eqref{eq:taulaw} and the reference update law \eqref{eq:ref_update}, with initial conditions $R(t_0)=R_{0}$, $\omega(t_0)=\omega_{0}$, and $R_{g}(t_0)=R_{0}$.

\subsection{Stability Analysis}
We now analyze the stability of the proposed closed-loop system.
The time derivative of the Lyapunov function $V$ along the trajectories of the closed-loop system satisfies:
\begin{align}\label{eq:V}
	\dot{V} = -\frac{k_{p}}{2} \tr \big(R^\T \dot{R}_{g}\big) - k_{d} \|\omega\|^{2}.
\end{align}
Note that if $\dot{R}_{g} \equiv 0$, $V$ serves as a standard Lyapunov function for the autonomous attitude dynamics.

Let $D \subset SO(3)$ be the domain of attraction for the auxiliary reference dynamics: $D := \{ R_0 \in SO(3) \mid \exists R_g(t) \text{ s.t. } R_g(t_0)=R_0, \dot R_g = \rho_n(R_d,R_g), \lim_{t\to\infty} R_g(t)=R_d \}$.
Due to the properties of the designed Navigation Field, this set covers almost all of $SO(3)$ except for a set of measure zero. This excludes only the undesired equilibria induced by the topology of $SO(3)$ (e.g., antipodal configurations), as established in \cite{nakano2023}.
Throughout this letter, we assume the initial auxiliary reference and plant state satisfy $(R_0,\omega_0,R_{g}(t_0))\in D\times\mathbb{R}^3 \times D$; in particular, $R_{g}(t_0)$ is chosen within the domain of attraction of the navigation field.

In the following proofs, $C$ denotes a generic positive constant whose exact value may change from line to line to simplify the notation.
We emphasize that $R_g$ is continuous and continuously differentiable on each open sampling interval; all derivative-based inequalities are applied on such intervals where the vector field is Lipschitz. By enforcing the event condition only at sampling instants and proving the infinite occurrence of safe samples (cf. Eq. \eqref{Sinfity1}), we obtain a monotone increasing time-rescaling $S(t)$ whose growth guarantees that the total action of $\dot{R}_g$ is bounded in the required sense; details appear in the proof of Theorem \ref{Thm1}.

We are now in a position to state the main theorem regarding the global solvability and convergence of the proposed system.
\begin{thm}
	\label{Thm1}
	Consider the closed-loop system with initial conditions $(\omega_{0},R_{0}) \in \mathbb{R}^{3} \times D$ satisfying $V(R_{0},\omega_{0},R_{0}) \leq \Gamma(R_{0})$.
	Then the system admits a unique global solution $(R,\omega,R_{g})$ satisfying $R, \omega \in C^{1}([t_0,\infty))$ and $R_{g} \in C([t_0,\infty)) \cap C^{1}\big([t_0,\infty)\setminus \{t_k\}_{k\in\mathbb{N}_0}\big)$.
    Furthermore, the following properties hold:
    (i) $V(R,\omega,R_{g})(t) \leq \Gamma(R_{g})(t)$ for all $t \ge t_0$;
    (ii) $\lim_{t\to\infty} R_{g}(t) = R_{d}$; and
    (iii) $\lim_{t \to \infty} R(t) = R_{d}$ and $\lim_{t \to \infty} \omega(t) = 0$.
\end{thm}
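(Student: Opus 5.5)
I will build the solution interval by interval, prove the invariance (i) by a margin argument on each triggered interval, and then obtain (ii) and (iii) from a time-rescaling argument followed by a cascade/Barbalat argument.

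\textbf{Existence, uniqueness and invariance (i).} On each interval $[t_k,t_{k+1})$ the indicator in \eqref{eq:ref_update} is frozen at its value at $t_k$. Hence the closed loop \eqref{eq:rot-sys}, \eqref{eq:taulaw}, \eqref{eq:ref_update} is an ODE on $SO(3)\times\mathbb{R}^3\times SO(3)$ with a locally Lipschitz right-hand side. Lipschitz continuity follows from the saturation $\max\{\|\rho_n\|,\varepsilon\}$, from $\Gamma$ being Lipschitz on the admissible region, and from the $\max\{\cdot,0\}$ structure.

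I will proceed inductively on $k$ and prove (i) at the same time. Suppose $V(t_k)\le\Gamma(R_g(t_k))$.
\begin{itemize}
\item If \eqref{eq:event_condition} fails at $t_k$, then $\dot R_g\equiv0$ and $\dot V=-k_d\|\omega\|^2\le0$ by \eqref{eq:V}. So $V$ is nonincreasing while $\Gamma$ stays constant.
\item If \eqref{eq:event_condition} holds, then $\Gamma(t_k)-V(t_k)\ge(c_\Gamma-1)V(t_k)$. This margin is not enough near $V=0$, so I will use the alternative bound $\Gamma-V\ge(1-1/c_\Gamma)\Gamma\ge(1-1/c_\Gamma)c_1$, where $c_1>0$ is a lower bound of $\Gamma$ on the relevant compact set.
\end{itemize}
From \eqref{eq:V} and $\|\dot R_g\|\le\kappa\Gamma$, I get $|\dot\Gamma-\dot V|\le C$. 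Here the bound on $\omega$ comes from $V\le\Gamma\le k_p(2-\varepsilon_\Gamma)$ together with the Lipschitz continuity of $\Gamma$. Consequently $\Gamma-V\ge(1-1/c_\Gamma)c_1-C(t-t_k)>0$ as long as $t-t_k<s_\#$. This is where the choice of $s_\#$ stated before the theorem enters. I will also use that the DSM factor $\Gamma-V$ turns off the motion of $R_g$ at the boundary, so that $V\le\Gamma$ persists on the whole interval; this is the argument of \cite{nakano2023}.

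Since the state stays in a compact set, the solution does not blow up and can be continued to $t_{k+1}$. Continuity of $R_g$ across $t_k$ then gives the stated regularity, and the induction yields a global unique solution together with (i).

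\textbf{Convergence of $R_g$ (ii).} Define $S(t)=\int_{t_0}^t\kappa(\Gamma-V)\mathds 1\,/\max\{\|\rho_n\|,\varepsilon\}\,ds$. Then $R_g(t)=\tilde R_g(S(t))$, where $\tilde R_g$ solves the navigation flow $\dot{\tilde R}_g=\rho_n$. Since $R_g(t_0)\in D$, (ii) follows once $S(t)\to\infty$; the case where $\tilde R_g$ reaches $R_d$ in finite $S$ is trivial.

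To show $S\to\infty$ I first need infinitely many safe samples (the property \eqref{Sinfity1}). Suppose instead that the condition fails for all $k\ge K$. Then $R_g$ is frozen, and LaSalle gives $V\to0$. Since $\Gamma(R_g)\ge c_1>0$, the condition $\Gamma-c_\Gamma V\ge0$ eventually holds, which is a contradiction.

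Each safe sample contributes at least $\kappa\frac{(1-1/c_\Gamma)c_1}{2}s_\#/\max\{\sup\|\rho_n\|,\varepsilon\}>0$ to $S$, by the margin estimate above. Hence $S(t)\to\infty$. This uniform per-sample increment is exactly what the periodic structure buys, and I regard it as the main obstacle: it requires $c_1$ to be a uniform lower bound of $\Gamma$ along the trajectory. I will get this bound from the repulsive term of $P$ keeping $R_g$ in a compact subset of the strictly steady-state admissible set, on which $\Gamma$ is positive and continuous.

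\textbf{Convergence of $(R,\omega)$ (iii).} I first show that $\dot R_g$ is integrable. Near $R_d$ the navigation field is exponentially attracting in $S$, and $dS/dt$ is bounded. Therefore $\int_{t_0}^\infty\|\dot R_g\|dt=\int\|\rho_n\|\,dS/\max\{\cdot\}<\infty$, which is the bounded total variation mentioned in the introduction.

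Integrating \eqref{eq:V} then shows that $V$ has a finite limit and that $\int k_d\|\omega\|^2<\infty$. Because $\dot\omega$ is bounded, $\|\omega\|^2$ is uniformly continuous, and Barbalat's Lemma gives $\omega\to0$.

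Next, $\dot\omega$ is uniformly continuous: $\dot R_g$ is bounded, so $\sk(R^\T R_g)$ is Lipschitz in time. A second application of Barbalat's Lemma gives $\dot\omega\to0$, hence $\sk(R^\T R_g)\to0$. This leaves $R^\T R_g\to I$ or $R^\T R_g$ tending to a rotation by $\pi$. The latter would require $\phi=2$, but $V\le\Gamma\le k_p(2-\varepsilon_\Gamma)$ forces $\phi\le 2-\varepsilon_\Gamma$. Therefore $R^\T R_g\to I$, and combined with (ii) this gives $R\to R_d$.
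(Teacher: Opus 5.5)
Your proof is correct and follows the paper's skeleton: invariance through the DSM factor, the margin $\Gamma-V\ge\frac{c_\Gamma-1}{2c_\Gamma}c_1$ on $[t_k,t_k+s_\#]$ after a safe sample, infinitely many safe samples via the frozen-reference contradiction, the rescaling $R_g=\tilde R_g(S(t))$, and Barbalat's Lemma applied twice. It departs from the paper in three places.

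(1) You get global existence by continuing interval by interval, using (i) and $V\le k_p(2-\varepsilon_\Gamma)$ to bound $\omega$ uniformly. The paper instead first proves an exponential-in-$T$ a priori bound by a Gr\"onwall argument and establishes (i) afterwards. Your version gives a uniform bound, but it ties existence to (i).

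(2) More substantively, you obtain finite total variation of $R_g$ from the change of variables $\int_{t_0}^\infty\|\dot R_g\|\,dt=\int_0^{S(\infty)}\|\rho_n(\tilde R_g(s))\|\,ds$. This is the arc length of the navigation-flow trajectory. It is finite because $\rho_n$ is bounded on the relevant sublevel set of $P$ and decays exponentially near $R_d$ by the gradient-domination bound in \eqref{cond1}. The paper instead uses $\|\dot R_g\|\le CP^{1/2}$ and sums a geometrically decaying sequence $p_j$ over safe intervals. Your route avoids that per-interval bookkeeping. It also shows that bounded variation holds for any nondecreasing reparametrization, since it only uses $\Gamma-V\ge0$. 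So in your argument the periodic triggering is needed only to force $S(t)\to\infty$.

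(3) You explicitly exclude $R^{\top}R_g$ approaching a rotation by $\pi$ through $\phi(R_e)\le 2-\varepsilon_\Gamma$. The paper leaves this implicit. The same observation is what justifies $V\to0$ in the frozen-reference contradiction.

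Two small repairs are needed. First, \eqref{eq:ref_update} contains no $\max\{\cdot,0\}$, so ``the DSM factor turns off $R_g$ at the boundary'' has to be made rigorous. One option is the paper's device: prove invariance for the system with $\max\{0,\Gamma-V\}$ and conclude by uniqueness. Another is to note that $\dot R_g$ is proportional to $W=\Gamma-V$ with a bounded coefficient, so $D^+W\ge -C|W|$ and $W(t_k)\ge0$ propagates. Second, with your definition of $S$ one has $\dot R_g=\dot S\,\rho_n$. The extra division by $\max\{\cdot\}$ in your integral is therefore spurious, and the boundedness of $dS/dt$ is not needed.
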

\begin{proof}
	The time-local solvability follows from standard ODE theory, since the right-hand sides of the closed-loop dynamics are locally Lipschitz continuous on each interval $[t_k,t_{k+1})$.
	To establish time-global solvability, we assume that the closed-loop system admits a solution $(R,\omega,R_{g})$ on $[t_0, T)$ for some $T>t_0$, and derive the \textit{a priori} estimate:
	\begin{align}\label{aes1}
		\sup_{t \in [t_0,T]}( \|R(t)\|+\|\omega(t)\|+\|R_{g}(t)\|)
        \leq  C(1+\|\omega_{0}\|)e^{CT},
	\end{align}
	where $C>0$ is a constant independent of $T$.
	Standard arguments from the theory of differential equations guarantee that such an \textit{a priori} estimate, combined with time-local solvability, ensures the existence of a unique global solution.
	
	Let us prove \eqref{aes1}.
	Since $R, R_{g} \in SO(3)$, their norms are uniformly bounded by a constant $C>0$.
	It remains to estimate $\omega(t)$.
    Using \eqref{eq:ref_update} and \eqref{eq:V}, and noting that $\Gamma(\rmr{g})$ is bounded and $\|\frac{\rho_{n}}{\max\{\|\rho_{n}\|,\varepsilon\}}\| \le 1$, we observe that $\dot{V} \leq -\frac{k_{p}}{2} \tr(R^\T \dot{R}_{g}) \leq C\|\dot{R}_{g}\| \leq C(1+V)$.
	Applying a standard differential inequality argument yields $\|\omega\|^{2} \leq C(1+\|\omega_{0}\|^{2})e^{CT}$.
	This establishes the time-global solvability of the closed-loop system.

	We next show property (i).
    A similar result was established in \cite{nakano2023}, but the proof here is reconstructed to explicitly account for the periodic update mechanism.
	To this end, we consider the system where the reference update law \eqref{eq:ref_update} is replaced by
	\begin{align}\label{req3}
		\dot{R}_{g}&=  \max\{0,\Gamma-V\}\mathds{1}_{\{(\Gamma-c_\Gamma V)(\sigma(t))\geq 0\}} \frac{\kappa \rho_n(t)}{\max \{\|\rho_n(t)\|, \varepsilon\}}, \notag \\
    &\qquad \forall t \in [t_k, t_{k+1}).
	\end{align}
	for all $k \in \mathbb{N}_0$.
	Analogous reasoning establishes the time-global solvability of this modified system.
	We claim that the solution satisfies property (i).
	This is true at $t=t_0$ by assumption.
	Proceeding by contradiction, assume that $V(\bar{t}) > \Gamma(\bar{t})$ for some $\bar{t}>t_0$.
	Define $\underline{t}:=\sup\{t \in (t_0,\bar{t}) \mid V(t) \leq \Gamma(t) \}$.
	Note that $\underline{t}<\bar{t}$, $V(\underline{t}) = \Gamma(\underline{t})$, and $V(t) - \Gamma(t) > 0$ for all $t \in (\underline{t},\bar{t}]$.
	Consequently, $\dot{R}_{g}=0$ on $(\underline{t},\bar{t}]$ due to \eqref{req3}.
    Since $\Gamma=\Gamma(R_g)$, we have $\dot{\Gamma}=0$ on $(\underline{t},\bar{t}]$ (as $\dot{R}_g=0$), and from \eqref{eq:V}, $\dot{V}=-k_d\|\omega\|^2\le0$.
	Therefore, $\dot{V} - \dot{\Gamma} \le 0$ on $(\underline{t}, \bar{t}]$. Since $V(\underline{t}) = \Gamma(\underline{t})$, this implies that $V(t) - \Gamma(t) \le 0$ for all $t \in (\underline{t}, \bar{t}]$, which contradicts the fact that $V(t) - \Gamma(t) > 0$ on $(\underline{t}, \bar{t}]$.
	Thus, the claim holds, implying $\max\{0,\Gamma-V\}=\Gamma-V$.
	Therefore, the solution of the modified system also solves the original closed-loop system.
	By uniqueness, we conclude that the solution of the closed-loop system satisfies property (i).

    Let us prove property (ii).
	From \eqref{eq:V}, the definition of $\Gamma$, \eqref{eq:ref_update}, and property (i), we have $\dot{\Gamma}-\dot{V} \geq  \dot{\Gamma} -\frac{k_{p}}{2} |\tr(R^\T \dot{R}_{g}) | \geq -C$, where $C$ is a positive constant independent of $t$.
	Furthermore, there exists $c_{1}>0$ independent of $t$ such that
	\begin{align}\label{boundG1}
		\Gamma(R_g(t)) \geq c_{1}>0.
	\end{align}
	Because $R_g(t)$ evolves inside a compact subset strictly contained in the interior of the admissible steady-state set $\mathcal{C}$ for all $t \ge t_0$, and $\Gamma_d, \Gamma_g$ are continuous functions, the extreme value theorem guarantees the existence of such $c_1>0$.
	If $(\Gamma-c_\Gamma V)(t_k) \geq 0$, then for any $t \in [t_k,t_k+ s_{\#}]$,
	\begin{align}
    (\Gamma-V)(t)
     & = (\Gamma-V)(t_k) + \int_{t_k}^{t}  (\dot{\Gamma}-\dot{V}) ds
    \notag \\
     & \geq \frac{c_\Gamma - 1}{c_\Gamma} \Gamma(t_k) + \int_{t_k}^{t}  (\dot{\Gamma}-\dot{V}) ds
    \notag \\
     & \geq  \frac{c_\Gamma - 1}{c_\Gamma} c_{1} \!-\! C(t-t_k) \geq \frac{c_\Gamma - 1}{2 c_\Gamma} c_{1} > 0,
    \label{GammaV1}
\end{align}
	where $s_{\#}:=\min\{T_s, \frac{c_\Gamma - 1}{2 c_\Gamma C} c_1\}$ is independent of $t$.
	We assert that the set of indices $t_k$ such that $(\Gamma-c_\Gamma V)(t_k) \geq 0$ has infinite cardinality.
	Suppose, to the contrary, that $c_\Gamma V(R,\omega,R_{g})(t_k) >  \Gamma(t_k)$ for all $t_k \geq n$ for some $n \in \mathbb N$.
	Then, by \eqref{eq:ref_update}, $\dot{R}_{g}(s)=0$ for all $s \geq t_k \geq n$.
	This implies $\Gamma(R_{g})$ is constant and positive on $[n,\infty)$.
    From \eqref{eq:V}, we deduce that $V(R,\omega,R_{g})(s) \to 0$ as $s \to \infty$.
    However, this contradicts the assumption $c_\Gamma V(s) > \Gamma(s) > 0$.
	Thus, the claim holds.

	From this claim and \eqref{GammaV1}, we have $\lim_{t \to \infty} S(t)=\infty$, where
	\begin{equation}\label{Sinfity1}
		S(t):=\int_{t_0}^{t}(\Gamma(\tau)-V(\tau))\mathds{1}_{\{(\Gamma-c_\Gamma V)(\sigma(\tau))\geq 0\}} d\tau.
	\end{equation}
	On the other hand, it follows from \cite[Lemmas 9--12]{nakano2023} that $\lim_{t \to \infty}\tilde{R}_{g}=R_{d}$, where $\tilde{R}_{g}$ is the solution to the autonomous equation $\frac{d}{ds}{\tilde{R}}_{g}(s)= \frac{\kappa{\rho}_{n}(\tilde{R}_{g}(s))}{{\rm max}\{\|{\rho}_{n}(\tilde{R}_{g}(s))\|,\varepsilon\}}$ with $\tilde{R}_{g}(0)=R_{0} \in D$.
	Define $\hat{R}_{g}(t):=\tilde{R}_{g}(S(t))$. We show that $\hat{R}_{g}\equiv R_{g}$.
    The time derivative of $S(t)$ is given by $\dot{S}(t) = (\Gamma(t)-V(t)) \mathds{1}_{\{(\Gamma-c_\Gamma V)(\sigma(t))\ge 0\}}$.
	Consider the simplified equation $\frac{d{R}_{g}}{dt}(t)= \dot{S}(t) \frac{\kappa\rho_{n}(R_{g}(t))}{{\rm max}\{\|\rho_{n}(R_{g}(t))\|,\varepsilon\}}$.
	It is evident that ${R}_{g}$ satisfies this equation with $R_{g}(0)=R_{0}$.
	Furthermore, $\hat{R}_{g}$ also solves this equation with $\hat{R}_{g}(0)=R_{0}$.
	By the uniqueness of solutions, we conclude $\hat{R}_{g}\equiv R_{g}$.
	Since $\lim_{t \to \infty} S(t)=\infty$, we obtain $\lim_{t\to\infty}R_{g}(t)=\lim_{t\to\infty}\tilde{R}_{g}(S(t))=R_{d}$.
	Thus property (ii) is proved.

	It remains to establish property (iii).
	By property (ii), we know $\lim_{t\to\infty} R_{g}(t) = R_{d}$. Since $\rho_n(R_d, R_d) = 0$, there exists $t_{*}>0$ such that for all $t \geq t_{*}$,
	\begin{subequations}\label{cond1}
		\begin{align}
			P(t)=P_{a}(t),
			\\
			c_{0} \| {\rm grad} P_{a}(t) \|^{2} \leq  P_{a}(R_{g};R_{d}) \leq \frac{1}{c_{0}} \| {\rm grad} P_{a}(t) \|^{2},
			\\
			\|\rho_{n}(R_d, R_g)\| \leq \varepsilon,
		\end{align}
	\end{subequations}
	where $0 < c_{0} < 1$ is a constant independent of $t$.
	Let $\mathcal{T}_{safe}^* := \{ t_k \in \mathcal{T}_{safe} \mid t_k \geq t_{*} \}$.
    As established previously, $\mathcal{T}_{safe}^*$ contains countably infinitely many elements.
    We denote the elements of $\mathcal{T}_{safe}^*$ as a sequence $\{t_{j}\}_{j=1}^{\infty}$ with $t_{*} \leq t_{1}<t_{2}<t_{3}<\ldots$ and $\lim_{j\to\infty} t_{j}=\infty$.
	Note that if $t_k \in \mathcal{T} \setminus \mathcal{T}_{safe}^*$, then $\dot{R}_{g}(t) = 0$ for $t \in (t_k,t_k+T_s)$.
	Using \eqref{eq:ref_update}, \eqref{GammaV1}, and \eqref{cond1}, we observe that
    \begin{align*}
		&\frac{d}{dt} P(R_{g};R_{d})\\
        & =  (\Gamma-V)\mathds{1}_{\{(\Gamma-c_\Gamma V)(\sigma(t))\geq 0\}}  \frac{\kappa\tr({\rm grad}P(R_{g};R_{d})^\nlT \rho_{n}) }{\varepsilon}
		\\
		                            & \leq \left\{
		\begin{array}{ll}
			-\frac{c_{0} c_{1}\kappa (c_\Gamma - 1)}{2 c_\Gamma \varepsilon}  P(R_{g};R_{d}), & t \in [t_k,t_k+s_{\#}],
			\\
			0,                                                       & \text{otherwise},
		\end{array}
		\right.
	\end{align*}
	where  $t_k \in \mathcal{T}_{safe}^*$.
	Then, $p_{j}:=\sup_{t \in [t_{j},t_{j}+T_s]} P(R_{g}(t);R_{d})$ satisfies $p_{j+1}\leq \alpha P(R_{g}(t_{j+1});R_{d}) \leq \alpha p_{j}$,
	where $\alpha:=e^{-\frac{c_{0} c_{1}\kappa (c_\Gamma - 1)}{2 c_\Gamma \varepsilon}s_{\#}} \in (0,1)$.
	Noting that $\| \dot{R}_g(t) \| \le C \| \rho_n(t) \| \le C P(R_g(t); R_d)^{1/2}$ near the equilibrium, we integrate \eqref{eq:V} over $[t_{*},\infty)$ and use \eqref{cond1} and $p_{j+1} \leq \alpha p_{j}$ along with $V\geq 0$ to obtain
	\begin{align*}
		 & k_{d} \int_{t_{*}}^{\infty} \|\omega(\tau)\|^{2} \,d\tau                                                                      \\
		 & \leq  V(t_{*}) - \int_{t_{*}}^{\infty}  \frac{k_{p}}{2} \tr(R(\tau)^\nlT \dot{R}_{g}(\tau)) \,d\tau
		\\
		 & \leq  V(t_{*}) - \sum_{j=1}^{\infty}\int_{t_{j}}^{t_{j}+T_s}  \frac{k_{p}}{2} \tr(R(\tau)^\nlT \dot{R}_{g}(\tau)) \,d\tau
		\\
		 & \leq V(t_{*})  +  C \sum_{j=1}^{\infty} p_{j}^{1/2} <+\infty.
	\end{align*}
    Since the right-hand side of the closed-loop dynamics is piecewise locally Lipschitz and bounded on each interval $[t_k, t_{k+1})$, the time derivative $\frac{d}{dt}\|\omega(t)\|^{2} = 2\omega^\T\dot{\omega}$ is uniformly bounded. This implies $\|\omega(t)\|^{2}$ is uniformly continuous.
	Applying Barbalat's Lemma, we deduce $\lim_{t \to \infty} \|\omega(t)\|^{2}=0$.
	Furthermore, by the same boundedness properties of the closed-loop system, the time derivative $\ddot{\omega}(t)$ is uniformly bounded wherever defined, which implies that $\dot{\omega}(t)$ is uniformly continuous. 
	Since $\lim_{t \to \infty} \omega(t) = 0$, applying Barbalat's Lemma again yields $\lim_{t \to \infty} \dot{\omega}(t) = 0$.
	Taking the limit as $t\to \infty$ in the closed-loop dynamics, we arrive at
	$\lim_{t \to \infty} \sk(R_{e}(t))^\vee =0$, which, combined with property (ii), implies property (iii).
\end{proof}

\subsection{Exponential Convergence}
\begin{thm}\label{Thm2}
Consider the closed-loop system with a constant desired attitude $R_d \in SO(3)$ and initial conditions $(\omega_{0},R_{0}) \in \mathbb{R}^{3} \times D$ satisfying $V(R_{0},\omega_{0},R_{0}) \leq \Gamma(R_{0})$.
Then there exist positive constants $c$ and $C$ such that the corresponding solution $(R,\omega,R_{g})$ satisfies
	\begin{align}
		\|\ro - \rmr{d}\| + \|\omega\| + \|\rmr{g} - \rmr{d}\|  \leq C e^{-c t}, \quad t>0.
	\end{align}
\end{thm}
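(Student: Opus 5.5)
Our plan is to build on the proof of Theorem~\ref{Thm1} and split the argument into two parts. First we show that the auxiliary reference converges to $R_d$ exponentially. Then we show that the inner loop, driven by this exponentially vanishing reference motion, converges exponentially as well.

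\textbf{Step 1: exponential decay of $R_g$.} By Theorem~\ref{Thm1}(ii) there is a finite time $t_*$ after which \eqref{cond1} holds. In the proof of Theorem~\ref{Thm1} we showed that the safe samples occur infinitely often. We will sharpen this to a uniform bound: for $t\ge t_*$, consecutive safe samples are at most $N T_s$ apart, with $N$ independent of $t$. The argument goes as follows. On a window with no safe sample, $\dot R_g=0$ and $\Gamma(R_g)\ge c_1$ is frozen. Near the equilibrium $R=R_g$, $\omega=0$, the standard PD law gives the strict Lyapunov bound $\dot V\le -\lambda V$. We obtain it from the cross-term function $V+\delta\,\omega^\T \ma{J}\sk(R_e)^\vee$ with small $\delta>0$, which is valid because $\phi(R_e)\le k_p(2-\varepsilon_\Gamma)/k_p$ is bounded away from the antipodal set. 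Moreover $V\le\Gamma\le C$ by (i). Hence $V$ drops below $c_1/c_\Gamma$ within a uniform time, and the trigger \eqref{eq:event_condition} fires. Combining this bound $N$ with the contraction $p_{j+1}\le\alpha p_j$ from the proof of Theorem~\ref{Thm1}, and the fact that $P$ is nonincreasing, we get $P(R_g(t);R_d)\le C\alpha^{t/(NT_s)}$. Then \eqref{cond1} together with the local equivalence $P_a\sim\|R_g-R_d\|^2$ gives $\|R_g-R_d\|\le Ce^{-ct}$ and $\|\dot R_g\|\le C\|\rho_n\|\le Ce^{-ct}$. Before $t_*$, everything is bounded, so enlarging $C$ covers the finite interval $[t_0,t_*]$.

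\textbf{Step 2: exponential decay of the inner loop.} We use the same strict Lyapunov function $W=V+\delta\,\omega^\T \ma{J}\sk(R_e)^\vee$. For small $\delta$ it satisfies $c\,(\|\omega\|^2+\phi(R_e))\le W\le C(\|\omega\|^2+\phi(R_e))$. Along the closed loop,
\[
\dot W\le -\lambda W + C\|\dot R_g\|\le -\lambda W + Ce^{-ct}.
\]
Here the perturbation term collects the $\tr(R^\T\dot R_g)$ contribution in \eqref{eq:V} and the $\dot R_g$-dependent part of $\frac{d}{dt}\sk(R_e)^\vee$, both bounded using the boundedness of $\omega$ from Theorem~\ref{Thm1}. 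The comparison lemma then gives $W\le Ce^{-c't}$. We have $\|R-R_g\|^2=4\phi(R_e)$, so the triangle inequality $\|R-R_d\|\le\|R-R_g\|+\|R_g-R_d\|$ concludes the proof.

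\textbf{Main obstacle.} The strict decrease $\dot W\le-\lambda W$ holds only on a region where $\phi(R_e)<2$ uniformly. This is guaranteed by the cap $\Gamma\le k_p(2-\varepsilon_\Gamma)$ and property (i), but the constants must be chosen carefully, since $\|\omega\|^2\le 2\Gamma/\lambda_{\min}(\ma J)$ bounds the cross term. The second delicate point is the uniform bound on the gaps between safe samples in Step~1, which is exactly what turns the qualitative ``infinitely many safe samples'' of Theorem~\ref{Thm1} into a rate. We would first try to derive it from the strict decay of $W$ while $R_g$ is frozen, with an explicit $N\approx \lceil \lambda^{-1}\log(c_\Gamma C/c_1)/T_s\rceil+1$.
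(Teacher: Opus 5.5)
Your proposal is correct. Step~2 is essentially the paper's second half: a cross-term Lyapunov function, a perturbation bounded by $C\|\dot R_g\|$, the comparison lemma, and the triangle inequality.

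\textbf{Where you differ (Step~1).} The paper never bounds the length of hold phases. It invokes Theorem~\ref{Thm1}(iii) to get $V\to 0$. Together with $\Gamma(R_g)\ge c_1$, this yields a time $t_\#$ after which $\Gamma-c_\Gamma V\ge c_1/2$ at every sample. From $\max\{t_\#,t_*\}$ on, the reference is therefore never held, and $\frac{d}{dt}P\le -\frac{c_0c_1\kappa}{2\varepsilon}P$ holds on the whole half-line.

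You instead prove a uniform bound $NT_s$ on the gaps between safe samples. You get it from the strict decay of the frozen inner loop on $\{V\le k_p(2-\varepsilon_\Gamma)\}$, and chain it with the per-safe-interval contraction $\alpha$ and the monotonicity of $P$.

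\textbf{What each route buys.} The paper's argument is shorter. However, it leans on the qualitative, Barbalat-based convergence in Theorem~\ref{Thm1}(iii), so the onset time $t_\#$, and hence the constant $C$, are not explicit. Your argument uses only (i) and (ii), and Step~2 in effect re-derives (iii) with a rate. It gives an explicit decay rate in terms of $\lambda$, $\alpha$ and $N$. The price is that you need the strict Lyapunov inequality on the whole capped sublevel set, not just near the equilibrium. The cap supplies this, since there $\|\sk(R_e)^\vee\|^2=\phi(2-\phi)\ge\varepsilon_\Gamma\phi$.

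\textbf{One sign to fix.} The paper uses $R_e=R^\T R_g$ and $\tau=k_p\sk(R_e)^\vee-k_d\omega$. Under these conventions, the time derivative of $\omega^\T J\sk(R_e)^\vee$ contains the term $+k_p\|\sk(R_e)^\vee\|^2$. The cross term must therefore enter with a minus sign, $W=V-\delta\,\omega^\T J\sk(R_e)^\vee$ with $\delta>0$, as in the paper's $V_\alpha$. With your plus sign, $\dot W$ would not be negative definite.
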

\begin{proof}
	From properties (ii), (iii), and \eqref{boundG1}, we know that $V(R, \omega, R_g) \to 0$ and $\Gamma(\rmr{g}) \ge c_1 > 0$.
    Hence there exists $t_{\#} > 0$ such that for all $t \ge t_{\#}$, $\Gamma(R_{g})(t)-c_\Gamma V(R,\omega,R_{g})(t) \geq \frac{1}{2}c_{1}$.
	Using \eqref{cond1}, we observe that for $t \geq \max\{t_{\#},t_{*}\}$,
	\begin{align*}
		\frac{d}{dt} P(R_{g};R_{d})
		 &= (\Gamma-V) \frac{\kappa\tr({\rm grad}P(R_{g};R_{d})^\nlT \rho_{n}) }{\varepsilon} \\
		 &\leq  -\frac{c_0 c_{1}\kappa}{2\varepsilon}  P(R_{g};R_{d}).
	\end{align*}
	This immediately yields the desired decay estimate:
	\begin{align}
		\phi(\rmr[t]{g} \rmr{d}) + P_r = P(R_{g}(t);R_{d})  \leq C e^{-c t}, \quad t>0.
		\label{conver1}
	\end{align}

	To establish the exponential convergence of the tracking error, we define the modified Lyapunov function $V_{\alpha}(R,\omega,R_{g}):=\frac12 \omega^{\nlT} J \omega + k_{p}\phi(R_{e}) -\alpha \omega^{\nlT} \sk(R_{e})^\vee$, where $\alpha > 0$ is a small constant ensuring $V_\alpha$ is positive definite and equivalent to $V$.
	Using \eqref{eq:rot-sys} and \eqref{eq:taulaw}, the time derivative $\dot{V}_{\alpha}$ yields $\dot{V}_{\alpha} \le - (k_{d}/2) \|\omega\|^{2} - \alpha c_k \| \sk(R_{e})^\vee\|^{2} + C \|\dot{R}_{g}\|$ for some constants $c_k > 0$ and $C > 0$. 
	By applying the standard scalar inequality $ab \le \frac{\gamma}{2} a^2 + \frac{1}{2\gamma} b^2$ (valid for any $\gamma>0$) to the cross terms and choosing $\alpha$ appropriately, we obtain the required quadratic bounds.
	Since $\|\dot{R}_{g}\| \leq C e^{-{c_{1}} t}$ from \eqref{conver1}, there exists $t_2$ sufficiently large such that for $t \geq t_2$, $\dot{V}_{\alpha} + c_2 {V}_{\alpha} \leq C e^{-c_1 t}$ for some $c_{2} > 0$.
	By applying a standard differential inequality argument, we obtain $V(R(t),\omega(t),R_{g}(t)) \leq C e^{-c t}$ for some $c>0$.
	Finally, using the property $\|Q A\| = \|A\|$ for any orthogonal matrix $Q$, the total tracking error satisfies $\|\ro - \rmr{d}\| \leq \|\ro - \rmr{g}\| + \|\rmr{g} - \rmr{d}\| = 2 \sqrt{\phi(\rmr[t]{} \rmr{g})} + 2 \sqrt{\phi(\rmr[t]{g} \rmr{d})}$, completing the proof.
\end{proof}

\section{NUMERICAL SIMULATION}
\label{sec:sim}

\begin{figure}[t]
	\centering
	\includegraphics[width=0.75\linewidth]{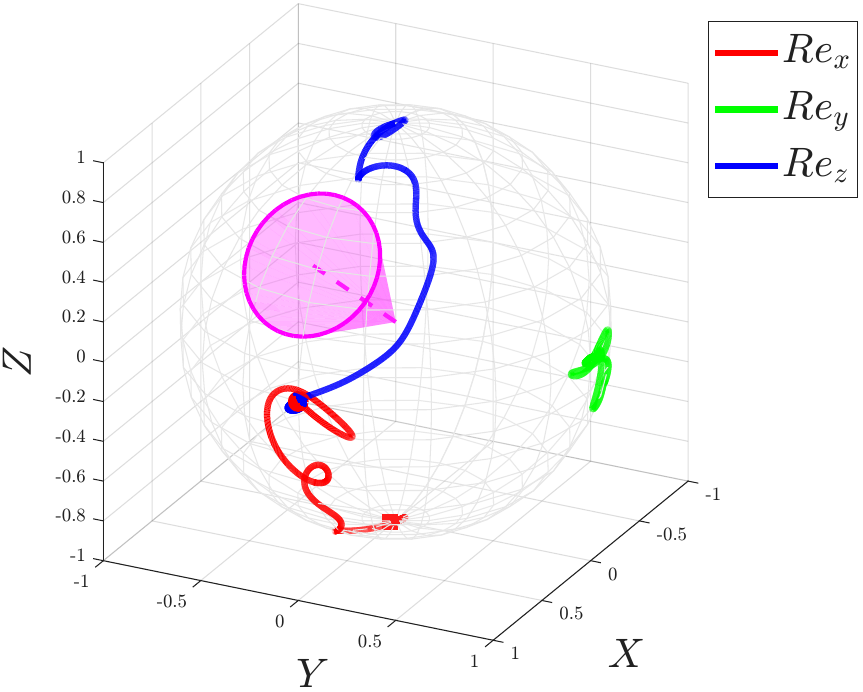}
	\caption{Evolution of system trajectories (PET-ERG). The body $z$-axis safely navigates around the forbidden conic region to reach the desired orientation.}
	\label{fig:atti_tra_pet}
\end{figure}
\begin{figure}[t]
	\centering
	\includegraphics[width=0.75\linewidth]{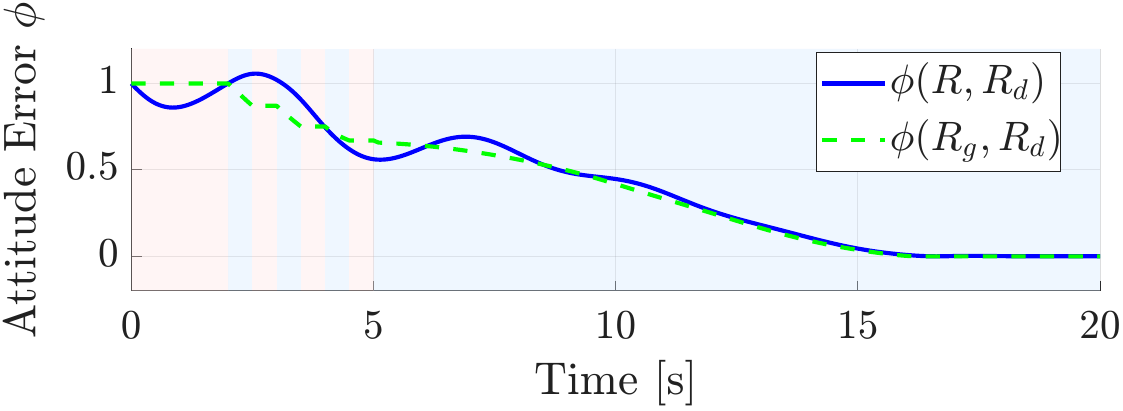}
	\caption{Evolution of attitude error $\phi(\rmr{}, \rmr{d})$ and reference error $\phi(\rmr{g}, \rmr{d})$ under the PET-ERG scheme.}
	\label{fig:attitude_error_pet}
\end{figure}
\begin{figure}[t]
	\centering
	\includegraphics[width=0.75\linewidth]{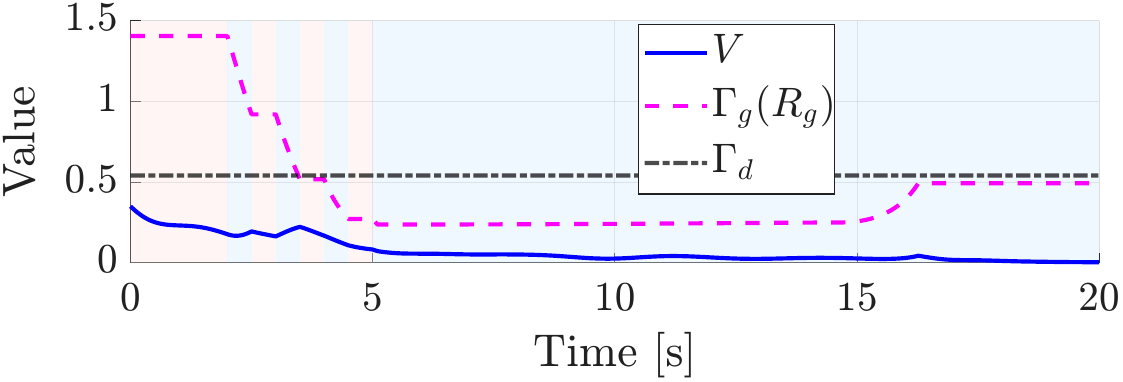}
	\caption{Evolution of the Lyapunov function $V$ and the safety thresholds $\Gamma_d$ and $\Gamma_g$. The periodic event-triggered mechanism maintains $V$ strictly below the aggregate margin.}
	\label{fig:gamma_pet}
\end{figure}
\begin{figure}[t]
	\centering
	\includegraphics[width=0.75\linewidth]{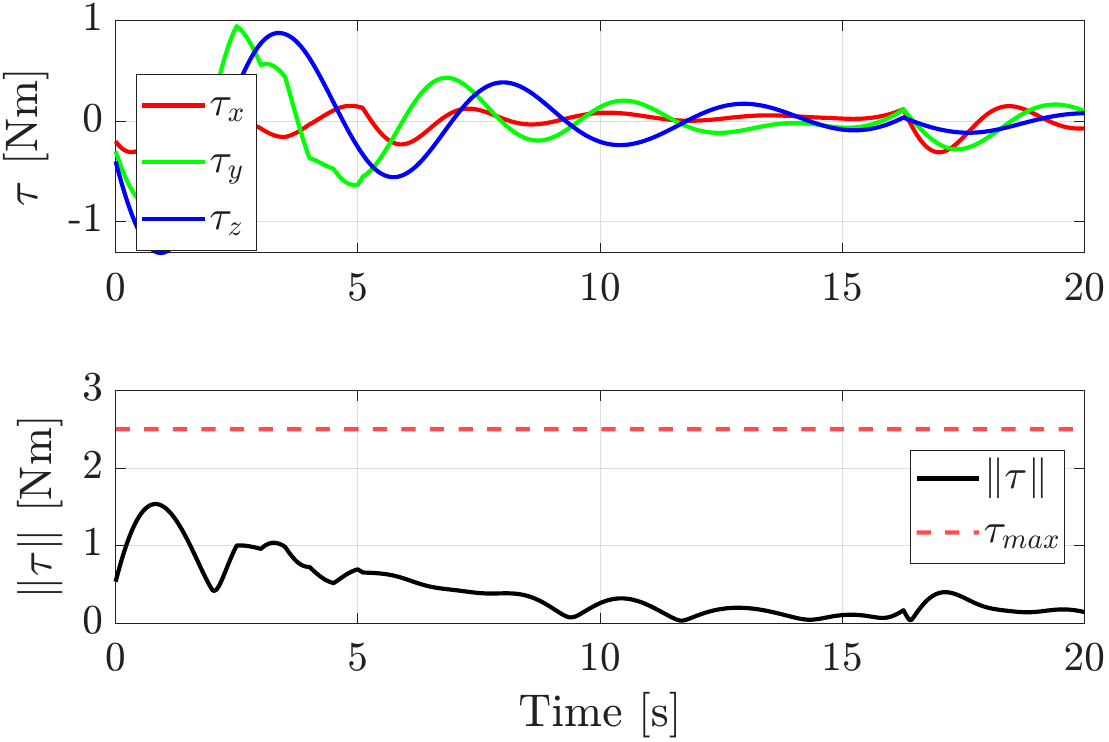}
	\caption{Evolution of the control torque input $\ve{\tau}$, strictly respecting the saturation limit $\tau_{\max}$.}
	\label{fig:tau_input_pet}
\end{figure}

In this section, we carry out numerical simulations to verify the effectiveness of the proposed PET-ERG scheme applied to the attitude control of a rigid body on $SO(3)$.
To evaluate the efficacy of the proposed method in handling non-convex constraints, we configure a reorientation scenario where the geodesic interpolation on $SO(3)$ connecting $\rmr{}(0)$ and $\rmr{d}$ intersects the forbidden cone, which can be verified by directly evaluating \eqref{eq:constraint_pointing} along the interpolation.
We consider an asymmetric rigid body with the inertia matrix $\ma{J} = \mathrm{diag}(1, 2, 3)~\si{\kilogram\cdot\meter\squared}$.
The control objective is to steer the attitude from the initial identity matrix $\rmr{}(0) = \ma{I}_3$ to the desired attitude $\rmr{d} = \exp(\frac{\pi}{2} \widehat{\ve{e}}_y)$, corresponding to a $90^\circ$ rotation around the body $y$-axis. The initial angular velocity is $\ve{\omega}(0) = [0.2~0.3~0.4]^\T~\si{\radian\per\second}$.

The system is subject to the input saturation \eqref{eq:constraint_input} and the geometric pointing constraint \eqref{eq:constraint_pointing}.
The maximum control torque is $\tau_{\max} = \SI{2.5}{\newton\cdot\meter}$.
The pointing constraint is defined by the body-fixed axis $\ve{a}_b = [0~0~1]^\T$, the inertial vector $\ve{a}_c = [-0.791~0.061~-0.609]^\T$, and the cone angle $\theta_c = 160^\circ$.
For the repulsive potential field, the parameters are set to $\delta = \cos(\theta_c) + 0.05$, $\zeta = \delta + 0.05$, $\varepsilon = 10^{-3}$, and $\varepsilon_\Gamma = 0.1$.

The controller gains in \eqref{eq:taulaw} are set to $k_p = 5$ and $k_d = 1$.
Given the input constraint $\tau_{\max}$ and the controller gains, the safe level set value $\Gamma_d$ is computed offline by solving the optimization problem described in \cite{nakano2023}.
The PET-ERG parameters are chosen as $\kappa = 1.0$, $c_\Gamma = 3$, and $T_s = \SI{0.5}{\second}$.
The initial reference state is set to $\rmr{g}(0) = \ma{I}_3$, aligning with the initial system attitude.

Figs. \ref{fig:atti_tra_pet}--\ref{fig:tau_input_pet} validate the theoretical properties established in Section \ref{sec:proposed_method}.
Fig. \ref{fig:atti_tra_pet} depicts the 3D evolution of the body axes, showing that the body $z$-axis successfully avoids the forbidden conic region (cyan) while converging to the desired orientation (magenta).
Fig. \ref{fig:attitude_error_pet} demonstrates the exponential convergence of both the actual attitude error $\phi(\rmr{}, \rmr{d})$ and the auxiliary reference error $\phi(\rmr{g}, \rmr{d})$ to zero, validating Theorem \ref{Thm2}.
As shown in Fig. \ref{fig:gamma_pet}, the Lyapunov function $V(t)$ is maintained below the aggregate dynamic threshold $\Gamma = \min\{\Gamma_d, \Gamma_g(\rmr{g}(t))\}$. The blue- and red-shaded backgrounds indicate intervals where the safety condition \eqref{eq:event_condition} is satisfied (reference updated) and violated (reference held), respectively. The periodic event-triggered logic intermittently suspends reference updates at each sampling instant $T_s$, introducing intervals of constant auxiliary reference that are consistent with the cascade stability analysis. During the continuous update phases, the robust margin parameter $c_\Gamma$ prevents any inter-sample constraint violations, which is consistent with the forward invariance result of Theorem \ref{Thm1}.
Finally, Fig. \ref{fig:tau_input_pet} confirms that the control torque satisfies the saturation limit $\|\ve{\tau}\| \leq \tau_{\max}$ at all times.

\section{CONCLUSIONS}
\label{sec:conclusion}
This letter proposed a Periodic Event-Triggered Explicit Reference Governor (PET-ERG) for constrained attitude control on $SO(3)$.
The proposed scheme ensures the satisfaction of input saturation and pointing constraints without relying on online optimization.
In contrast to continuous-time schemes, the proposed hybrid supervisory structure provides the structural properties necessary to rigorously establish the asymptotic stability and exponential convergence of the overall cascade system for almost all initial configurations.
Numerical simulations demonstrated the effectiveness of the proposed method in handling constraints while achieving precise attitude tracking.

\end{document}